\DeclareMathOperator{\arctanh}{arctanh}
\DeclareMathOperator{\arcsinh}{arcsinh}
\newcommand*{\QEDA}{\hfill\ensuremath{\blacksquare}}
\begin{document}
	\title{On the representation of non-holonomic univariate power series}

	\author{Bertrand Teguia Tabuguia \orcidlink{0000-0001-9199-7077}~  \and Wolfram Koepf}
	
	\authorrunning{B. Teguia Tabuguia and W. Koepf}
	
	\institute{University of Kassel, Heinrich-Plett-Str. 40. 34132 Kassel\\
		\email{\{bteguia,koepf\}@mathematik.uni-kassel.de}
	}
	
	\maketitle        

	\begin{abstract}
		Holonomic functions play an essential role in Computer Algebra since they allow the application of many symbolic algorithms. Among all algorithmic attempts to find formulas for power series, the holonomic property remains the most important requirement to be satisfied by the function under consideration. The targeted functions mainly summarize that of meromorphic functions. However, expressions like $\tan(z)$, $z/(\exp(z)-1)$, $\sec(z)$, etc., particularly, reciprocals, quotients and compositions of holonomic functions, are generally not holonomic. Therefore their power series are inaccessible by the holonomic framework, including Maple's \texttt{convert/FormalPowerSeries} command up to Maple 2021. From the mathematical dictionaries, one can observe that most of the known closed-form formulas of non-holonomic power series involve another sequence whose evaluation depends on some finite summations. In the case of $\tan(z)$ and $\sec(z)$ the corresponding sequences are the {B}ernoulli and {E}uler numbers, respectively. Thus providing a symbolic approach that yields complete representations when linear summations for power series coefficients of non-holonomic functions appear, might be seen as a step forward towards the representation of non-holonomic power series.
		
		 By adapting the method of ansatz with undetermined coefficients, we build an algorithm that computes least-order quadratic differential equations with polynomial coefficients for a large class of non-holonomic functions. A differential equation resulting from this procedure is converted into a recurrence equation by applying the Cauchy product formula and rewriting powers into polynomials and derivatives into shifts. Finally, using enough initial values we are able to give normal form representations (Geddes et al. 1992) to characterize several non-holonomic power series. As a consequence of the defined normal transformation, it turns out that our algorithm is able to detect identities between non-holonomic functions that were not accessible in the past. We discuss this algorithm and its implementation for Maple 2022.
		
		Our Maple and Maxima implementations are available under the \texttt{FPS} software which can be downloaded at\\ \url{http://www.mathematik.uni-kassel.de/~bteguia/FPS_webpage/FPS.htm}.

		\keywords{Non-holonomic function \and Formal power series \and Quadratic differential equation \and Cauchy product formula \and Normal form \and Bernoulli numbers \and Euler numbers \and Bell numbers.}
	\end{abstract}
	
	\section{Introduction}
	
	Let $\mathbb{K}$ be a field of characteristic zero; mostly $\mathbb{K}$ denotes a finite extension field of the rationals. A function $f(z)$ is holonomic (or $D$-finite) over $\mathbb{K}$, if it satisfies a homogeneous linear differential equation with polynomial coefficients in $\mathbb{K}[z]$. Similarly, a sequence $(a_n)_{n\geqslant0}$ of numbers in $\mathbb{K}$ is holonomic (or $P$-recursive) over $\mathbb{K}$, if it satisfies a homogeneous linear recurrence equation with polynomial coefficients in $\mathbb{K}[n]$. Since analytic functions can be represented by power series, holonomic power series connect analytic holonomic functions to holonomic sequences (see \cite{stanley1980differentiably}). However, in this paper, we deal with ``formal'' power series, i.e., our computations are independent of any notion of convergence, and are valid for a given function at a point of expansion $z_0$ $\in\mathbb{K} \cup \{-\infty,\infty\}$, whenever the analytic requirement at $z_0$ is guaranteed. For implementation purposes, the Maple \texttt{series} command enables us to authenticate the existence of a series expansion before we proceed to search for a representation for it.
	
	The exponential generating function of the {B}ernoulli numbers $B_n,n=0,1,\ldots,$ has the series formula
	\begin{equation}
		\frac{z}{\exp(z)-1} = \sum_{n=0}^{\infty} \frac{B_n}{n!} z^n. \label{eq1}
	\end{equation}
	
	Numerous recurrence equations are known for the {B}ernoulli numbers (see \cite{agoh2009shortened}). The most basic among them is
	\begin{equation}
		\sum_{k=0}^{n-1} \binom{n}{k} B_k = 0,~ n\geqslant 2 \label{eq2}
	\end{equation}
	from which one can compute the series coefficients in $(\ref{eq1})$ from the initial value $B_0=1$. All explicit recurrence formulas of the {B}ernoulli numbers present the appearance of finite summation(s) that we can see as an indicator of its non-holonomic property. A proof that the sequence of {B}ernoulli numbers is not holonomic can be found in \cite{carlitz1964recurrences}. This fact implies that their generating functions as well as functions whose power series are defined by means of the sequence of {B}ernoulli numbers are non-holonomic (see \cite{jungen1931series}, \cite{mallinger1996}). An important reference for the non-holonomicity of $\tan(z),$ $\csc(z)$, and $\sec(z),$ is \cite{stanley1980differentiably}. We call the series of such functions non-holonomic power series. Note that the same conclusions also arise for power series whose coefficients depend on the {E}uler numbers since they are connected to the {B}ernoulli numbers. Furthermore, note that the non-holonomic character is not limited to the presence of the Bernoulli and Euler numbers. It is well known that quotients and compositions of holonomic functions are generally not holonomic. Other non-holonomic sequences of numbers can be at the core definition of several other non-holonomic power series. 
	
	We recall that the ``\textit{form}'' level abstraction considers functions as they are represented in terms of ``chosen'' elementary functions. It recognizes that a particular function can have many different valid representations in terms of these elementary functions. This is why we often differentiate between an expression which refers to the form, and a function which refers to the the mathematical object with its range and its domain of definition. Many expression classes allow simplification from one expression to another using a ``\textit{normal transformation}\footnote{In the cited reference, the authors use ``normal function''. We do not use this designation to avoid confusion with the objects we manipulate.}'' which can prove the zero-equivalence of their difference. What we call a normal form in a given class of expressions is a representation that is invariant under any application of the normal transformation used in that class. Unlike a normal transformation, a canonical transformation always gives the same representation for equivalent expressions. For more details on normal forms and algebraic representations, see \cite[Chapter 3]{geddes1992algorithms}.
	
	This paper is concerned with a general-purpose symbolic algorithm that defines a normal transformation for a class above holonomic functions by computing normal forms of their power series. Moreover, in several cases our algorithm is able to find the same representation for equivalent non-holonomic expressions, making it behaves like a canonical transformation. The non-holonomic power series represented by our algorithm have finite summations in \textit{quadratic recurrence equations} satisfied by their coefficients. These \textit{sum-recursive recurrence equations} (that we will often call quadratic recurrence equations) are similar to those considered in \cite{agoh2007convolution} for finding convolution identities for {B}ernoulli numbers. With this approach we are able to recover some Ramanujan identities mentioned in \cite{chellali1988acceleration}. We deduce quadratic recurrence equations from homogeneous quadratic differential equations (QDE) with polynomial coefficients. These can be seen as higher-order differential equations than the ones considered in \cite{chalkley1960second} with some linear differential monomial appearing (see also \cite[Section 6]{eremenko1982meromorphic}). Nevertheless, the differential equation sought is of least order possible. Our computational method for finding QDEs proceeds like that used in \cite{schatz2019automatic}, but in a less direct and more efficient way (avoidance of nested loops) which also allows us to define more precisely the type of functions we expect as inputs. Remark that this differs our development from recent computations of non-linear differential equations for {B}ernoulli numbers; as for instance \cite{choi2014note} and \cite{khan2019finding}.
	
	As we use quadratic recurrence equations (QRE) to give power series formulas, the targeted representations are given in a recursive form. This allows us to obtain the following representations using our Maple implementation.
	
	\begin{example}\item \vspace{-0.2cm}
		
		\begin{small}
			\begin{lstlisting}
				> FPS(z/(exp(z)-1),z,n)
			\end{lstlisting}
			
			\vspace{-0.3cm}
			
			\begin{dmath}\label{eq3}
				\mathit{Series} \! \left(\left[\moverset{\infty}{\munderset{n =0}{\textcolor{gray}{\sum}}}\! A \! \left(n \right) z^{n},
				A \! \left(n +3\right)\hiderel{=}-\frac{\left(\moverset{n +2}{\munderset{\textit{\_k} =1}{\textcolor{gray}{\sum}}}\! A \! \left(\textit{\_k} \right) A \! \left(n +3-\textit{\_k} \right)\right)+A \! \left(n +2\right)}{n +4}\right],
				\\[-0.33mm]
				\left\{A \! \left(n \right)\right\},\left\{A \! \left(0\right) \hiderel{=} 1,A \! \left(1\right) \hiderel{=} -\frac{1}{2},A \! \left(2\right) \hiderel{=} \frac{1}{12}\right\},\mathit{INFO} \right)
			\end{dmath}
			
			\begin{lstlisting}
				> FPS(1/log(1+z),z,n)
			\end{lstlisting}
			
			\vspace{-0.3cm}
			
			\begin{dmath}\label{eq4}
				\hspace{-0.8cm} \mathit{Series} \! \left(\left[\moverset{\infty}{\munderset{n =0}{\textcolor{gray}{\sum}}}\! A \! \left(n \right) z^{n -1}, 
				A \! \left(n +3\right)=-\frac{\left(n +1\right) A \! \left(n +2\right)+\left(\moverset{n +2}{\munderset{\textit{\_k} =1}{\textcolor{gray}{\sum}}}\! A \! \left(\textit{\_k} \right) A \! \left(n +3-\textit{\_k} \right)\right)}{n +4}\right],
				\\[-0.33mm]
				\left\{A \! \left(n \right)\right\},\left\{A \! \left(0\right) \hiderel{=} 1,A \! \left(1\right) \hiderel{=} \frac{1}{2},A \! \left(2\right) \hiderel{=} -\frac{1}{12}\right\},\mathit{INFO} \right).
			\end{dmath}
		\end{small}
	\end{example}
	
	As given above, all series expansions will be given in the neighborhood of zero since the general case can easily be deduced from this. The result of this paper is an improved version of one contribution from first author's Ph.D. thesis (see \cite[Section 8.7]{BTphd}) which is concerned with symbolic computation of formal power series, hence the acronym \texttt{FPS} used above. The thesis deals with the holonomic, hypergeometric type, and non-holonomic functions. For holonomic and hypergeometric type functions, see the references \cite{BTmc2020}, \cite{BThypergeometric}, and \cite{teguia2021symbolic}. Our software is available for download for Maple 2021 and Maxima 5.44 users at its dedicated web page \url{http://www.mathematik.uni-kassel.de/~bteguia/FPS_webpage/FPS.htm}.
	
	In Section \ref{sec2}, we specify the class of functions that are eligible to our method. Then in Section \ref{sec3}, we describe our algorithm for finding QDEs. This will be followed by explanations on how we find QREs and deduce series representations with appropriate numbers of initial values. We will also see how the whole algorithm is able to prove identities (see \cite[Section 3.3]{geddes1992algorithms}, \cite[Exercise 9.8]{koepf2021computer}) like
	
	\begin{equation}\label{eq5}
		\log\left(\tan\left(\dfrac{z}{2}\right) + \sec\left(\dfrac{z}{2}\right)\right) = \arcsinh\left(\dfrac{\sin(z)}{1+\cos(z)}\right),~~ -\pi < z < \pi.
	\end{equation}

	\section{The class of $\delta_2$-finite functions}\label{sec2}
	
	It is not precise to say that we consider non-holonomic functions given that we intend to describe an algorithm to compute their power series. Since algorithms only operate on finite data structures, we must define a suitable class of functions, which naturally contains holonomic functions, and extends to functions like $\sec(z)$, $\cot(z)$, $\csc(z)$, $\tan(z)$, etc. An analogous development is presented in \cite{kauers2011}. Starting from the observation that differentiating $y(z)=\tan(z)$ yields $y'(z)=1+\left(\tan(z)\right)^2=1+y(z)^2$, one can see the targeted class of functions as solutions of a certain type of algebraic ordinary differential equations.
	
	Throughout the paper, we denote by $\mathbb{K}$ a field of characteristic zero (generally $\mathbb{K}$ is a finite extension field of $\mathbb{Q}$), and we consider differential equations with rational coefficients over $\mathbb{K}$. We assume that 	
	
	\begin{equation}\label{eq6}
		\frac{d^{-1}}{dz}f=f^{(-1)}=1, ~\text{ and }~\frac{d^{0}}{dz}f=f^{(0)}=f,
	\end{equation}
	for a differentiable function $f$.
	
	\begin{definition}[Homogeneous quadratic differential equation]\label{def1} Let $d$ be a non-negative integer. A differential equation of order $d$ in the dependent variable $y$ is said to be homogeneous quadratic over $\mathbb{K}$, if there exist polynomials $P_0$, $P_1$,$\ldots$, $P_{r}$, $r = 1+d(d+5)/2$ (see Proposition \ref{prop}), such that
		\begin{equation}\label{eq7}
			P_r {y^{(d)}}^2 + P_{r-1} y^{(d)}y^{(d-1)} + \cdots + P_{r-d-1} y^{(d)} + \cdots + P_4 {y'}^2 + P_3 y'y+P_2 y'+ P_1 y^2 + P_0 y=0,
		\end{equation}
		and $P_r,\ldots,P_{r-d}$ are not all zero.
	\end{definition}
	
	In this definition, it is required that at least one of the polynomial coefficients of quadratic differential monomials\footnote{Quadratic differential monomial: terms with product of two derivatives (including the square of a derivative).} of order $d$ is non-zero. Thus the other coefficients might equal zero. This is equivalent to say that $(\ref{eq7})$ is a homogeneous quadratic differential equation of order $d$ if ``the polynomial coefficients of quadratic differential monomials of order at most $d$ are not all zero, and at least one of $P_r,\ldots,P_{r-d}$ is non-zero''. To admit holonomic functions, it suffices to change this condition to ``$P_r,\ldots,P_0$ are not all zero, and at least one of $P_r,\ldots,P_{r-d-1}$ is non-zero'', and this defines our class of functions. However, as holonomic functions are also called $D$-finite functions: where $D$ stands for ``differentiability'', and finite refers to the requirement that the order should be finite, we would like to similarly define our class of functions so that the algorithmic approach of Section \ref{sec3} differs from that of holonomic functions only by the differential operator used.
	
	Let $f(z)$ be a differentiable function. Consider the following scheme
	
	\begin{equation}
		\begin{matrix}
			(1)~ 1,&&&&\\
			(2)~ f,& (3)~ f^2,& & & \\
			(4)~ f',& (5)~ f'f,& (6)~ (f')^2,& & \\
			(7)~ f'',& (8)~ f''f,& (9)~ f''f',& (10)~ (f'')^2, & \\
			(11)~ f''',& (12)~ f'''f,& (13)~ f'''f',& (14)~ f'''f'',& (15)~ (f''')^2,\\
			\ldots&&&&
		\end{matrix}
		\label{eq8}
	\end{equation}
	and assume that the positive integers in parentheses represent the derivative orders of the derivative operator that we are looking for. This operator, say $\delta_{2,z}$, computes the product of two derivatives of $f$ according to the ordering given in $(\ref{eq8})$.
	
	Looking at $(\ref{eq8})$ as an infinite lower triangular matrix reduces the definition of $\delta_{2,z}$ to that of a one-to-one map $\nu$ between positive integers and the corresponding subset of $\mathbb{N}\times\mathbb{N}$: $(i,j)_{i,j\in\mathbb{N}}, i\leqslant j$. This can be done by counting the couple\footnote{Tuple of two elements.} $(i,j)$ in $(\ref{eq8})$ from up to down, from the left to the right. We obtain
	\begin{equation}\label{eq9}
		\nu(k) = (i,j) = \begin{cases} (l,l)~~ \text{ if }~~ N=k\\ (l+1,k-N)~\text{ otherwise}\end{cases},
		\text{ where } l=\left\lfloor \sqrt{2k+\frac{1}{4}}-\frac{1}{2}\right\rfloor,~\text{ and }~N=\dfrac{l(l+1)}{2}.
	\end{equation}
	It remains to define a correspondence between the couple $(i,j)=\nu(k),~k\in\mathbb{N}$ and the differential monomials in $(\ref{eq8})$. This is straightforward by considering the assumption of $(\ref{eq6})$. We get
	\begin{equation}\label{eq10}
		\delta_{2,z}^k(f) = \dfrac{d^{i-2}}{dz^{i-2}}f \cdot \dfrac{d^{j-2}}{dz^{j-2}}f,~\text{ where }~ (i,j) = \nu(k).
	\end{equation}
	
	We implemented this operator in our Maxima package as \texttt{delta2diff(f,z,k)}. One can use it to recover some products of derivatives in $(\ref{eq6})$.
	
	\begin{example}\item
		
		\noindent
		\begin{minipage}[t]{8ex}\color{red}\bf
			\begin{verbatim}
				(%i1) 
			\end{verbatim}
		\end{minipage}
		\begin{minipage}[t]{\textwidth}\color{blue}
			\begin{verbatim}
				delta2diff(f(z),z,3);
			\end{verbatim}
		\end{minipage}
		\vspace{-0.25cm}
		\definecolor{labelcolor}{RGB}{100,0,0}
		\[\displaystyle
		\hspace{1cm}\parbox{10ex}{$\color{labelcolor}\mathrm{\tt (\%o1) }\quad $}\hspace{1cm}
		{{\mathrm{f}\left( z\right) }^{2}}\mbox{}
		\]
		\vspace{-0.25cm}
		
		\noindent
		\begin{minipage}[t]{8ex}\color{red}\bf
			\begin{verbatim}
				(%i2) 
			\end{verbatim}
		\end{minipage}
		\begin{minipage}[t]{\textwidth}\color{blue}
			\begin{verbatim}
				delta2diff(f(z),z,4);
			\end{verbatim}
		\end{minipage}
		\vspace{-0.25cm}
		\definecolor{labelcolor}{RGB}{100,0,0}
		\[\displaystyle
		\hspace{1cm}\parbox{10ex}{$\color{labelcolor}\mathrm{\tt (\%o2) }\quad $}\hspace{1cm}
		\frac{d}{d\,z}\cdot \mathrm{f}\left( z\right) \mbox{}
		\]
		\vspace{-0.25cm}
		
		\noindent
		\begin{minipage}[t]{8ex}\color{red}\bf
			\begin{verbatim}
				(%i3) 
			\end{verbatim}
		\end{minipage}
		\begin{minipage}[t]{\textwidth}\color{blue}
			\begin{verbatim}
				delta2diff(f(z),z,5);
			\end{verbatim}
		\end{minipage}
		\vspace{-0.25cm}
		\definecolor{labelcolor}{RGB}{100,0,0}
		\[\displaystyle
		\hspace{1cm}\parbox{10ex}{$\color{labelcolor}\mathrm{\tt (\%o3) }\quad $}\hspace{1cm}
		\mathrm{f}\left( z\right) \cdot \left( \frac{d}{d\,z}\cdot \mathrm{f}\left( z\right) \right) \mbox{}
		\]
		\vspace{-0.25cm}
		
		\noindent
		\begin{minipage}[t]{8ex}\color{red}\bf
			\begin{verbatim}
				(%i4) 
			\end{verbatim}
		\end{minipage}
		\begin{minipage}[t]{\textwidth}\color{blue}
			\begin{verbatim}
				delta2diff(f(z),z,6);
			\end{verbatim}
		\end{minipage}
		\vspace{-0.25cm}
		\definecolor{labelcolor}{RGB}{100,0,0}
		\[\displaystyle
		\hspace{1cm}\parbox{10ex}{$\color{labelcolor}\mathrm{\tt (\%o4) }\quad $}\hspace{1cm}
		{{\left( \frac{d}{d\,z}\cdot \mathrm{f}\left( z\right) \right) }^{2}}\mbox{}
		\]
		\vspace{-0.25cm}
		
		\noindent
		\begin{minipage}[t]{8ex}\color{red}\bf
			\begin{verbatim}
				(%i5) 
			\end{verbatim}
		\end{minipage}
		\begin{minipage}[t]{\textwidth}\color{blue}
			\begin{verbatim}
				delta2diff(f(z),z,14);
			\end{verbatim}
		\end{minipage}
		\vspace{-0.25cm}
		\definecolor{labelcolor}{RGB}{100,0,0}
		\[\displaystyle
		\hspace{1cm}\parbox{10ex}{$\color{labelcolor}\mathrm{\tt (\%o5) }\quad $} \hspace{1cm}
		\left( \frac{{{d}^{2}}}{d\,{{z}^{2}}}\cdot \mathrm{f}\left( z\right) \right) \cdot \left( \frac{{{d}^{3}}}{d\,{{z}^{3}}}\cdot \mathrm{f}\left( z\right) \right) \mbox{}
		\]
	\end{example}
	
	\begin{definition}[$\delta_2$-finite functions]\label{def2} A function $f(z)$ is called $\delta_2$-finite if there exist polynomials $P_0(z),\cdots,P_d(z)$, not all zero, such that
		\begin{equation}\label{eq12}
			P_d(z)\delta_{2,z}^{d+2}\left(f(z)\right)+\cdots+P_2(z)\delta_{2,z}^4\left(f(z)\right)+P_1(z)\delta_{2,z}^3\left(f(z)\right)+	P_0(z)f(z)=0.
		\end{equation}
	\end{definition}
	
	\begin{remark} From the definition of a $\delta_2$-finite function one sees that it does not necessarily require the first $\delta_2$ derivative to be $1$ as given in $(\ref{eq8})$. Indeed, Definition \ref{def2} extends to the non-homogeneous case since by the Leibniz product rule, we can differentiate a non-homogeneous QDE finitely many times and get a homogeneous one. However, we mention that our Maxima implementation can compute non-homogeneous QDEs, although we avoid non-homogeneity for formal power series representations to escape dealing with constant terms.
	\end{remark}
	
	Although the differential equation $(\ref{eq12})$ can be linear (holonomic case), we often say that a $\delta_2$-finite function is a function that satisfies a homogeneous QDE with polynomial coefficients, or simply a homogeneous QDE. 
	
	\begin{theorem}\label{theo1} The class of holonomic functions is strictly contained in the class of $\delta_2$-finite functions.
	\end{theorem}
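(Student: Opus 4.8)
The plan is to establish the two halves of the statement separately: first that every holonomic function is $\delta_2$-finite (the inclusion), and then that some $\delta_2$-finite function fails to be holonomic (strictness).

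For the inclusion, I would start from the defining property of a holonomic function $f$, namely that it satisfies a nonzero linear ODE $q_m(z)f^{(m)}+\cdots+q_1(z)f'+q_0(z)f=0$ with polynomial coefficients. The key observation is that each pure derivative $f^{(p)}$ already occurs in the enumeration $(\ref{eq8})$: it is the leading entry $f^{(p)}\cdot f^{(-1)}$ of its row, so by $(\ref{eq10})$ and the convention $(\ref{eq6})$ there is an index $k_p$ with $f^{(p)}=\delta_{2,z}^{k_p}(f)$. Hence the linear ODE is literally a relation of the shape $(\ref{eq12})$: choosing $d$ large enough that $d+2\geqslant k_m$, one sets $P_{k_p-2}=q_p$ for $p=0,\ldots,m$ and lets all remaining $P_j$ vanish. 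Since the $q_p$ are not all zero, the $P_j$ are not all zero, so $f$ is $\delta_2$-finite in the sense of Definition \ref{def2}. The only point requiring care is the index bookkeeping through the map $\nu$ of $(\ref{eq9})$, i.e.\ verifying that the left-column positions of $(\ref{eq8})$ realize every order $f^{(p)}$; this is routine.

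For strictness I would exhibit $\tan(z)$ as a witness. Its Riccati relation $y'=1+y^2$ is not yet homogeneous because of the constant term (a forbidden ``$\delta^1=1$'' contribution), but differentiating once removes it and yields the homogeneous quadratic equation $y''-2yy'=0$, that is $\delta_{2,z}^{7}(y)-2\,\delta_{2,z}^{5}(y)=0$, which has the form $(\ref{eq12})$. Thus $\tan(z)$ is $\delta_2$-finite. It then remains to know that $\tan(z)$ is \emph{not} holonomic, and this is where the real content of the statement lies. I would not reprove this but invoke the classical non-holonomicity result \cite{stanley1980differentiably}.

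Consequently, the main obstacle is not the algebra: the inclusion is pure bookkeeping through $\nu$, and the quadratic differential equation for $\tan$ is immediate. The weight of the theorem rests entirely on the strictness, which hinges on having a witness whose non-holonomicity is already established in the literature (intuitively, $\tan$ has infinitely many poles, incompatible with $D$-finiteness). The crux is therefore to select such a witness and to verify that it does lie in the $\delta_2$-finite class, after which the strict inclusion follows.
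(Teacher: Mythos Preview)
Your proposal is correct and follows essentially the same approach as the paper. For the inclusion you argue, as the paper does, that the $\delta_2$-finite condition is obtained by relaxing the non-vanishing requirement on the coefficients so that purely linear relations $\sum q_p f^{(p)}=0$ are admitted; your version simply makes the index bookkeeping through $\nu$ explicit. For strictness you are in fact more explicit than the paper's proof, which leaves the witness implicit in the surrounding discussion: you single out $\tan(z)$, exhibit the homogeneous equation $y''-2yy'=0$ (the paper's equation~$(\ref{eq24})$), and invoke \cite{stanley1980differentiably} for its non-holonomicity, exactly the references the paper itself relies on in the introduction.
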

	\begin{proof} 
		The holonomic case is deduced from $(\ref{eq12})$ by observing that having $P_j(z)=0$, for $j\in$ $\{1,\ldots,d\}$ $\setminus$ $\left(i\cdot(i+3)/2\right)_{i\geq0}$, yields a holonomic differential equation since $\delta_{2,z}^{2+i\cdot(i+3)/2}(f(z))$ is a linear differential monomial for all non-negative integer $i$ (see $(\ref{eq8})$). \QEDA
	\end{proof}
	
	It is well-known that reciprocals of holonomic functions are generally not holonomic. Although the aim of this paper is not to study the closure properties of $\delta_2$-finite functions, in the following theorem, we show that an important family of reciprocals of holonomic functions is included in the class of $\delta_2$-finite functions.
	
	\begin{theorem}\label{theo3} The reciprocal of a function that satisfies a second-order holonomic differential equation is $\delta_2$-finite.
	\end{theorem}
	\begin{proof}
		Let $y(z)$ be a function that satisfies a second-order holonomic differential equation. Therefore there exist two rational functions $R_1(z)$ and $R_2(z)$ over $\mathbb{K}$ such that
		\begin{equation}\label{eq61}
			\frac{d^2}{dz^2}y(z) = R_1(z)\cdot \frac{d}{dz}y(z) + R_2(z)\cdot y(z).
		\end{equation}
		Let $u(z)=1/y(z)$. Without loss of generality, we look for a second-order QDE satisfied by $u$. This implies that the maximum order of the $\delta_2$ differentiation needed is $10$. We are looking for rational coefficients $c_j=c_j(z)$, $1\leq j \leq 9$ so that
		\begin{equation}\label{eq62}
			\sum_{j=2}^{10}c_{j-1}(z)\cdot \delta_{2,z}^j\left(u(z)\right)=0;
		\end{equation}
		the polynomial coefficients will be deduced by multiplication by the common denominator. Substituting $(\ref{eq61})$ into the left-hand side of $(\ref{eq62})$ and collecting the coefficients yields
		\begin{multline}\label{eq63}
			\frac{4 {c_9}\, {{\left( \frac{d}{d z} \operatorname{y}(z)\right) }^{4}}}{{{\operatorname{y}(z)}^{6}}}-\frac{2 \left( 2 {R_1}\, {c_9}+{c_8}\right) \, {{\left( \frac{d}{d z} \operatorname{y}(z)\right) }^{3}}}{{{\operatorname{y}(z)}^{5}}}+\frac{2 {c_6}\, {{\left( \frac{d}{d z} \operatorname{y}(z)\right) }^{2}}}{{{\operatorname{y}(z)}^{3}}}\\
			+\frac{\left( -4 {R_2}\, {c_9}+{{R}_{1}^{2}}\, {c_9}+{R_1}\, {c_8}+2 {c_7}+{c_5}\right) \, {{\left( \frac{d}{d z} \operatorname{y}(z)\right) }^{2}}}{{{\operatorname{y}(z)}^{4}}}+\frac{\left( -{R_1}\, {c_6}-{c_3}\right) \, \left( \frac{d}{d z} \operatorname{y}(z)\right) }{{{\operatorname{y}(z)}^{2}}}\\
			+\frac{\left( {R_2}\, \left( 2 {R_1}\, {c_9}+{c_8}\right) -{R_1}\, {c_7}-{c_4}\right) \, \left( \frac{d}{d z} \operatorname{y}(z)\right) }{{{\operatorname{y}(z)}^{3}}}+\frac{{c_1}-{R_2}\, {c_6}}{\operatorname{y}(z)}+\frac{{R_2}\, \left( {R_2}\, {c_9}-{c_7}\right) +{c_2}}{{{\operatorname{y}(z)}^{2}}}.
		\end{multline}
		Finally, we solve the linear system obtained by equating the rational coefficients in $(\ref{eq63})$ to zero. We obtain the solution
		\begin{equation}\label{eq64}
			\left\{{c_1}=0\operatorname{,}{c_2}={R_2}\, \mathit{C}\operatorname{,}{c_3}=0\operatorname{,}{c_4}=-{R_1}\, \mathit{C}\operatorname{,}{c_5}=-2 \mathit{C}\operatorname{,}{c_6}=0\operatorname{,}{c_7}=\mathit{C}\operatorname{,}{c_8}=0\operatorname{,}{c_9}=0\right\},
		\end{equation}
		where $\mathit{C}:=C(z)$ is an arbitrary rational function in $\mathbb{K}(z)$. Therefore, the reciprocal $u(z)$ of $y(z)$ satisfies a differential equation of the form
		\begin{equation}\label{eq65}
			\mathit{C}(z) \left( \operatorname{u}(z) \left( \frac{{{d}^{2}}}{d {{z}^{2}}} \operatorname{u}(z)\right) -2\, {{\left( \frac{d}{d z} \operatorname{u}(z)\right) }^{2}}-{R_1}(z)\, \operatorname{u}(z) \left( \frac{d}{d z} \operatorname{u}(z)\right) +{R_2}(z)\, {{\operatorname{u}(z)}^{2}} \right)=0
		\end{equation}
		Let $p(z)$ be the common denominator of $R_1(z)$ and $R_2(z)$. Then there exist $q(z)$ and $r(z)$ in $\mathbb{K}(z)$ such that $R_1(z)=-q(z)/p(z)$ and $R_2(z)=-r(z)/p(z)$. Equation $(\ref{eq65})$ is equivalent to
		\begin{equation}\label{eq66}
			{p}(z)\, \delta_{2,z}^8\left(\operatorname{u}(z)\right) -2\, {p}(z)\, \delta_{2,z}^6\left(\operatorname{u}(z)\right) +{q}(z)\,\delta_{2,z}^5\left(\operatorname{u}(z)\right) -{r}(z)\, \delta_{2,z}^3\left(\operatorname{u}(z)\right)=0,
		\end{equation}
		which concludes the proof. \QEDA
	\end{proof}
	
	\begin{corollary}\label{cor1} Let $f(z)$ be a holonomic function that satisfies the differential equation
		\begin{equation}\label{eq67}
			\operatorname{p}(z) \left( \frac{{{d}^{2}}}{d {{z}^{2}}} \operatorname{y}(z)\right) + \operatorname{q}(z) \left( \frac{d}{d z} \operatorname{y}(z)\right)  + \operatorname{r}(z) \operatorname{y}(z)=0,
		\end{equation}
		with $p(z)\neq0$. Then $1/f(z)$ is $\delta_2$-finite and satisfies the differential equation
		\begin{equation}\label{eq68}
			\operatorname{p}(z) \operatorname{y}(z) \left( \frac{{{d}^{2}}}{d {{z}^{2}}} \operatorname{y}(z)\right) -2 \operatorname{p}(z) {{\left( \frac{d}{d z} \operatorname{y}(z)\right) }^{2}}+\operatorname{q}(z) \operatorname{y}(z) \left( \frac{d}{d z} \operatorname{y}(z)\right) -\operatorname{r}(z) {{\operatorname{y}(z)}^{2}}=0.
		\end{equation}
	\end{corollary}
	\begin{proof}Immediate from the proof of Theorem \ref{theo3}. \QEDA
	\end{proof}
	
	\begin{example} Using Corollary \ref{cor1}, we can already give QDEs satisfied by $\sec(z)=1/\cos(z)$ and $\csc(z)=1/\sin(z)$. Since $\cos(z)$ and $\sin(z)$ satisfy the differential equation
		\begin{equation}\label{eq69}
			\frac{{{d}^{2}}}{d {{z}^{2}}} \operatorname{y}(z)+\operatorname{y}(z)=0,
		\end{equation}
		we have $p(z)=1$, $q(z)=0$ and $r(z)=1$, where $p,q,r$ are defined as in $(\ref{eq67})$. Therefore $\sec(z)$ and $\csc(z)$ satisfy the following differential equation
		\begin{equation}\label{eq70}
			\operatorname{y}(z) \left( \frac{{{d}^{2}}}{d {{z}^{2}}} \operatorname{y}(z)\right) -2 {{\left( \frac{d}{d z} \operatorname{y}(z)\right) }^{2}}-{{\operatorname{y}(z)}^{2}}=0.
		\end{equation}
	\end{example}
	
	In the next section, we present a general strategy to search for a QDE of least order satisfied by a given $\delta_2$-finite function.
	
	\section{Computing quadratic differential equations}\label{sec3}
	
	Given a $\delta_2$-finite function $f$, we propose an algorithm to compute the least-order QDE satisfied by $f$. This can be done by standard approaches after replacing the usual derivative operator by $\delta_{2,z}$. We consider the method of ansatz with undetermined coefficients as described in \cite{Koepf1992}.
	
	\begin{example}[$f(z):=1/\log(1+z)$]\item
		
		\begin{enumerate}
			\item $\delta_{2,z}^3(f(z))=1/\left(\log(1+z)\right)^2$, and we look for a rational function $C_0:=C_0(z)\in\mathbb{Q}(z)$ such that
			\begin{equation}\label{eq13}
				\delta_{2,z}^3(f(z)) + C_0(z) f(z) = \frac{1}{\left(\log(1+z)\right)^2} + C_0\frac{1}{\log(1+z)}= 0.
			\end{equation}
			Such a $C_0$ does not exist since $-\delta_{2,z}^3(f(z))/f(z)=-1/\log(1+z)\notin \mathbb{Q}(z)$\footnote{The aim is to collect rational factors that may appear while differentiating $f(z)$.}. In such a case the QDE sought might be of higher order.
			\item  $\delta_{2,z}^4(f(z))=-1/\left((1+z)\left(\log(1+z)\right)^2\right)$, and we look for rational functions $C_0,C_1\in\mathbb{Q}(z)$ such that
			\begin{equation}\label{eq14}
				\delta_{2,z}^4(f(z)) + C_1\delta_{2,z}^3(f(z)) + C_0f(z) = \frac{-1}{\left((1+z)\left(\log(1+z)\right)^2\right)} + \frac{C_1}{\left(\log(1+z)\right)^2} + \frac{C_0}{\log(1+z)} = 0,
			\end{equation}
			which is equivalent to
			\begin{equation}\label{eq15}
				\frac{C_0\left(1+z\right)\log(1+z)+C_1(1+z)-1}{\left(\log(1+z)\right)^2(1+z)}=0.
			\end{equation}
			We force the numerator to vanish by equating the coefficient in $\mathbb{Q}(z)[\log(1+z)]$ to zero. The obtained linear system is trivial and we get the solution
			\begin{equation}\label{eq16}
				\left\{\left(C_0=0,C_1=\frac{1}{1+z}\right)\right\}.
			\end{equation}
			Thus $f(z)$ satisfies the QDE
			\begin{equation}\label{eq17}
				\delta_{2,z}^4(y(z)) + \frac{1}{(1+z)}\delta_{2,z}^3(y(z))= \frac{d}{dz}y(z) + \frac{1}{(1+z)} y(z)^2=0.
			\end{equation}
			After clearing the denominators we finally get
			\begin{equation}\label{eq18}
				(1+z)\frac{d}{dz}y(z) + y(z)^2 = 0,
			\end{equation}
			with polynomial coefficients.
		\end{enumerate}
	\end{example}
	
	\begin{example}[$f(z):=\tan(z)$]\item
		
		\begin{enumerate}
			\item $\delta_{2,z}^3(f(z))=\left(\tan(z)\right)^2$, and we seek $C_0\in\mathbb{Q}(z)$ such that
			\begin{equation}\label{eq19}
				\delta_{2,z}^3(f(z)) + C_0(z) f(z) = \left(\tan(z)\right)^2 + C_0\tan(z)= 0.
			\end{equation}
			Since $-\delta_{2,z}^3(f(z))/f(z)=-\tan(z)\notin \mathbb{Q}(z)$, we proceed to the next step.
			\item $\delta_{2,z}^4(f(z))=1+\left(\tan(z)\right)^2$, and we seek $C_0,C_1\in\mathbb{Q}(z)$ such that
			\begin{equation}\label{eq20}
				\delta_{2,z}^4(f(z)) + C_1\delta_{2,z}^3(f(z)) + C_0f(z) = 1+\left(\tan(z)\right)^2 + C_1\left(\tan(z)\right)^2 + C_0\tan(z) = 0.
			\end{equation}
			The linear system obtained after equating the coefficients in $\mathbb{Q}(z)[\tan(z)]$ to zero has no solution. We proceed to the next $\delta_2$-derivative.
			\item This will continue until $\delta_{2,z}^7(f(z))=2\tan(z)(1+\left(\tan(z)\right)^2)$. We then look for the rational functions $C_0$,$\ldots$,$C_4$ such that
			\begin{multline}\label{eq21}
				\delta_{2,z}^7(f(z)) + C_4\delta_{2,z}^6(f(z)) +\cdots+ C_0f(z)= 2\tan(z)+2\left(\tan(z)\right)^3+C_4\left(1+2\left(\tan(z)\right)^2+\left(\tan(z)\right)^4\right)\\
				+C_3(\tan(z)+\left(\tan(z)\right)^3)+C_2\left(1+\left(\tan(z)\right)^2\right)+C_1\left(\tan(z)\right)^2 + C_0\tan(z)=0.
			\end{multline}
			This is equivalent to
			\begin{equation}\label{eq22}
				{C_4}\, {{\tan{(z)}}^{4}}+\left( {C_3}+2\right) \, {{\tan{(z)}}^{3}}+\left( 2 {C_4}+{C_2}+{C_1}\right) \, {{\tan{(z)}}^{2}}+\left( {C_3}+{C_0}+2\right)  \tan{(z)}+{C_4}+{C_2}=0.
			\end{equation}
			After equating the coefficients in $\mathbb{Q}(z)[\tan(z)]$ to zero we find the solution 
			\begin{equation}\label{eq23}
				\left\{\left(C_0=0,C_1=0,C_2=0,C_3=-2,C_4=0\right)\right\}.
			\end{equation}
			Therefore we get the QDE
			
			\vspace{-0.4cm}
			
			\begin{equation}\label{eq24}
				\delta_{2,z}^7(y(z))-2\delta_{2,z}^5(y(z))=\frac{{{d}^{2}}}{d {{z}^{2}}} \operatorname{y}(z)-2 \operatorname{y}(z) \left( \frac{d}{d z} \operatorname{y}(z)\right) =0.
			\end{equation}
		\end{enumerate}
	\end{example}
	
	The above examples show how our algorithm proceeds: the coefficients in $\mathbb{K}(z)[\alpha_1(z),\ldots,\alpha_k(z)]$, where the $\alpha_j(z)$, $1\leq j\leq k$ are transcendental functions verifying $\alpha_i(z)/\alpha_j(z) \notin \mathbb{K}(z)$, $1\leq i\neq j \leq k$, are collected by computing ratios of terms in the expansion of the ansatz; if the ratio is rational, then the corresponding terms are grouped together, otherwise they correspond to two distinct linear equations for the unknown coefficients of the differential equation sought. For instance, to get the factorization $\left( {C_3}+2\right) \, {{\tan{(z)}}^{3}}$ in $(\ref{eq22})$, the algorithm computes the ratio $(C_3 (\tan(z))^3)/ (2 (\tan(z))^3)=C_3/2\in\mathbb{Q}(z)$. Since there are no other rational ratios with $(2 (\tan(z))^3)$, the factor $\left( {C_3}+2\right)$ is an equation of the linear system in this case.
	
	Our algorithm iterates on the $\delta_2$ order of the QDE sought, and stops once it reaches a certain maximal order $N_{max}\in \mathbb{N}$. To be more precise, let us introduce the following notation.
	
	\begin{definition}[$\mathcal{M}_{\delta_2}(d)$] We denote by $\mathcal{M}_{\delta_2}(d)$, the maximum integer $n$ such that the $\delta_2$ derivative of $\delta_2$ order $n$ is a differential monomial of order $d$.
	\end{definition}
	
	\begin{example} \item
		\begin{itemize}
			\item $\mathcal{M}_{\delta_2}(1)=6$, since $\delta_{2,z}^7(f(z))=\frac{d^2}{dz^2}f(z)$ and $\delta_{2,z}^6(f(z))=\left(\frac{d}{dz}f(z)\right)^2$ is the last $\delta_2$ derivative of $f(z)$ of order $1$.
			\item $\mathcal{M}_{\delta_2}(2)=10$.
		\end{itemize}
	\end{example}
	
	\begin{proposition}\label{prop}For all positive integers $d$, $\mathcal{M}_{\delta_2}(d)=3+d(d+5)/2$.
	\end{proposition}
	
	We observed that the order $4$ is enough for finding QDEs satisfied by common non-holonomic functions. Therefore a suitable default value for the maximum number of iteration in our algorithm is $N_{max}=\mathcal{M}_{\delta_2}(4)=21$. Algorithm \ref{Algo1} gives the possibility to increase this value from the input. We mention that the method of ansatz with undetermined coefficients does not require to fix a maximum degree for the polynomial coefficients. The survey paper \cite{koepf1997} on the computation of holonomic differential equations present the advantage of this method over others for finding least-order differential equations.
	
	Our algorithm for finding homogeneous quadratic differential equations can be summarized as follows.
	
	\begin{algorithm}[h!]
		\caption{Searching for a QDE satisfied by a $\delta_2$-finite function $f$}
		\label{Algo1}
		\begin{algorithmic} 
			\Require A $\delta_2$-finite function $f(z)$, and optionally, a maximum order $d$ (default value $4$).
			\Ensure \texttt{FAIL} or a QDE of $\delta_2$ order at most $N_{max}=\mathcal{M}_{\delta_2}(d)$ over $\mathbb{K}$.
			\begin{enumerate}
				\item If $f=0$\footnotemark then the QDE is found and we stop.
				\item $f\neq0$, compute $C_0=\delta_{2,z}^3 f(z) / f(z),$
				\begin{itemize}
					\item[(1-a)] if $C_0\in\mathbb{K}(z)$ i.e $C_0(z)=P(z)/Q(z)$ where $P$ and $Q$ are polynomials, then we have found a QDE satisfied by $f$:
					$$Q(z)y(z)^2 - P(z)y(z) = 0.$$
					\item[(1-b)] If $C_0\notin \mathbb{K}(z)$, then go to 3.
				\end{itemize}
			\end{enumerate}
								\algstore{pause3b}
							\end{algorithmic}
						\end{algorithm}
						\clearpage 
						
						\begin{algorithm}
							\ContinuedFloat
							\caption{Searching for a QDE satisfied by a $\delta_2$-finite function $f$}
							\begin{algorithmic}
								\algrestore{pause3b}	
								\State
			\begin{enumerate}
				\setcounter{enumi}{2}
				\item $N_{max}:=\mathcal{M}_{\delta_2}(d)$; 
				\begin{itemize}
					\item[(3-a)] set $N:=2$;
					\item[(3-b)] compute $\delta_{2,z}^{N+2}f;$ 
					\item[(3-c)] expand the ansatz
					\begin{equation}\label{eq25}
						\delta_{2,z}^{N+2}(f(z)) + C_{N-1} \delta_{2,z}^{N+1}(f(z)) + \cdots + C_0 f(z) = \sum_{i=0}^{E}S_i,
					\end{equation}
					in elementary summands with $C_{N-1},\ldots, C_{0}$ as unknowns. $E\geqslant N$ is the total number of summands $S_i$ obtained after expansion.		
					\item[(3-d)]\label{Algo13d} For each pair of summands $S_i$ and $S_j$ $(0\leqslant i\neq j\leqslant E)$, group them additively together if $R(z)=S_i(z)/S_j(z)\in\mathbb{K}(z)$. These groups represent the linearly independent expressions whose rational coefficients are linear in the unknowns $C_0, C_1,\ldots, C_{N-1}$. Equating these coefficients to zero yields a linear system to solve in $\mathbb{K}(z)$. If the system has a non-empty set of solutions, then select a non-zero solution if it exists. If such a solution exists, then the \textit{step is successful}. The algorithm returns (and stop) the QDE obtained by substituting the values found for $C_0, C_1,\ldots, C_{N-1}$ in $(\ref{eq25})$ and multiplying the result by their common denominator. In the other cases, the \textit{step is not successful} and we move to the next step.
					\item[(3-e)] Increment $N$ ($N:=N+1$), and go back to (3-b), unless $N=N_{max}$.
				\end{itemize}
			\item Return \texttt{FAIL} ($N=N_{max}$, no QDE of order at most $d$ was found).
			\end{enumerate}	
		\end{algorithmic}
	\end{algorithm}
	
	
	\begin{remark}
		Algorithm \ref{Algo1} should not be seen as a decision algorithm for $\delta_2$-finiteness since such an algorithm is impossible to describe (see \cite{richardson1969some}). This can be justified by similar arguments from \cite{kauers2011} discussed in the case of holonomic functions. The capacity of the method also depends on how functions are encoded in the computer algebra system (CAS) used. For instance, our Maple implementation of Algorithm \ref{Algo1} finds a QDE of order $2$ for $\exp(2\arctanh(\sin(2z)/(1+\cos(2z))))$, whereas our Maxima implementation finds another one of order $3$. This is because Maple allows some automatic simplifications for compositions of $\exp$ and $\arctanh$ (see Subsection \ref{subsec3}). Thus, although Algorithm \ref{Algo1} is meant to find lowest-order QDEs, its capacity to detect $\delta_2$-finite functions is partly defined by the CAS used. Nevertheless, this happens only in rare cases because both Maple and Maxima implementations usually have identical results with lowest order possible.
	\end{remark}
	\footnotetext[5]{Only for trivial zero equivalences. A QDE of order zero $y=0$ would be the output.}
	
	Let us now present some results using our Maple implementation \texttt{FPS:-QDE} (or simply \texttt{QDE} if the \texttt{FPS} package is already loaded). Maple's \texttt{dsolve} command is used to solve the corresponding QDEs.
	
	\begin{example}\item
		
		\begin{lstlisting}
			> FPS:-QDE(sec(z),y(z))
		\end{lstlisting}
		
		\vspace{-0.2cm}
		
		\begin{dmath}\label{eq26}
			-y \! \left(z \right)^{2}-2 \left(\frac{d}{d z}y \! \left(z \right)\right)^{2}+\left(\frac{d^{2}}{d z^{2}}y \! \left(z \right)\right) y \! \left(z \right)=0
		\end{dmath}
		
		\vspace{-0.2cm}
		
		\begin{lstlisting}
			> dsolve(%,y(z))
		\end{lstlisting}
		
		\vspace{-0.2cm}
		
		\begin{dmath}\label{eq27}
			y \! \left(z \right)=\frac{1}{\textit{\_C1} \sin \! \left(z \right)-\textit{\_C2} \cos \! \left(z \right)}
		\end{dmath}
		
		\vspace{-0.3cm}
		
		\begin{lstlisting}
			> FPS:-QDE(z/log(1+z),y(z))
		\end{lstlisting}
		
		\vspace{-0.2cm}
		
		\begin{dmath}\label{eq28}
			\left(-1-z \right) y \! \left(z \right)+y \! \left(z \right)^{2}+z \left(1+z \right) \left(\frac{d}{d z}y \! \left(z \right)\right)=0
		\end{dmath}
		
		\vspace{-0.2cm}
		
		\begin{lstlisting}
			> dsolve(%,y(z))
		\end{lstlisting}
		
		\vspace{-0.2cm}
		
		\begin{dmath}\label{eq29}
			y \! \left(z \right)=\frac{z}{\ln \! \left(1+z \right)+\textit{\_C1}}
		\end{dmath}
		
		Next we compute a QDE for the generating function of {B}ernoulli polynomials of arbitrary order $k$ (see \cite{choi2014note}).
		\begin{lstlisting}
			> FPS:-QDE((t/(exp(t)-1))^k*exp(x*t),y(t))
		\end{lstlisting}
		
		\vspace{-0.2cm}
		
		\begin{dmath}\label{eq30}
			\left(t k x -t \,x^{2}+k^{2}-2 k x \right) y \! \left(t \right)^{2}+\left(-t k +2 x t +2 k \right) \left(\frac{d}{d t}y \! \left(t \right)\right) y \! \left(t \right)-t \left(1+k \right) \left(\frac{d}{d t}y \! \left(t \right)\right)^{2}+t k \left(\frac{d^{2}}{d t^{2}}y \! \left(t \right)\right) y \! \left(t \right)=0
		\end{dmath}
		
		\vspace{-0.2cm}
		
		\begin{lstlisting}
			> dsolve(%,y(t))
		\end{lstlisting}
		
		\vspace{-0.2cm}
		
		\begin{dmath}\label{eq31}
			y \! \left(t \right)=\frac{{\mathrm e}^{x t}}{\left(\frac{{\mathrm e}^{t} \textit{\_C1} -\textit{\_C2}}{t k}\right)^{k}}
		\end{dmath}
		
		\vspace{-0.2cm}
		
		\begin{lstlisting}
			> FPS:-QDE(sec(z)^k,y(z))
		\end{lstlisting}
		
		\vspace{-0.2cm}
		
		\begin{dmath}\label{eq32}
			-k^{2} y \! \left(z \right)^{2}+\left(-k -1\right) \left(\frac{d}{d z}y \! \left(z \right)\right)^{2}+k \left(\frac{d^{2}}{d z^{2}}y \! \left(z \right)\right) y \! \left(z \right)=0
		\end{dmath}
		
		\vspace{-0.2cm}
		
		\begin{lstlisting}
			> dsolve(%,y(z))
		\end{lstlisting}
		
		\vspace{-0.2cm}
		
		\begin{dmath}\label{eq33}
			y \! \left(z \right)=\frac{1}{\left(\frac{\textit{\_C1} \sin \left(z \right)-\textit{\_C2} \cos \left(z \right)}{k}\right)^{k}}
		\end{dmath}
		
		\vspace{-0.2cm}
		
		\begin{lstlisting}
			> FPS:-QDE(tan(z)^k,y(z))
		\end{lstlisting}
		
		\vspace{-0.2cm}
		
		\begin{dmath}\label{eq34}
			\left(20 k^{2}-24\right) \left(\frac{d}{d z}y \! \left(z \right)\right)^{2}+4 k^{2} \left(\frac{d^{2}}{d z^{2}}y \! \left(z \right)\right) y \! \left(z \right)+3 \left(k -2\right) \left(k +2\right) \left(\frac{d^{2}}{d z^{2}}y \! \left(z \right)\right)^{2}+\left(-4 k^{2}+6\right) \left(\frac{d^{3}}{d z^{3}}y \! \left(z \right)\right) \left(\frac{d}{d z}y \! \left(z \right)\right)+k^{2} \left(\frac{d^{4}}{d z^{4}}y \! \left(z \right)\right) y \! \left(z \right)=0
		\end{dmath}
		The \texttt{dsolve} command does not find an explicit solution for the latter QDE. However, we can verify that $\tan(z)^k$ is solution in the following way.
		
		\begin{lstlisting}
			> simplify(eval(subs(y(z)=tan(z)^k,lhs(%))))
		\end{lstlisting}
		
		\begin{dmath}\label{eq35}
			0
		\end{dmath}
	\end{example}
	
	Our implementation is available in Maple 2022 as a feature of the \texttt{FindODE} command.
	
	\section{Power series representations of $\delta_2$-finite functions}\label{sec4}
	
	\subsection{QDE to QRE}
	
	In this section, we assume that the power series of $f(z)$ is represented as $\sum_{n=0}^{\infty} a_nz^n$. Note that the Laurent series case easily follows since the appropriate shift of initial values can be deduced from the coefficients of the computed differential equation (see \cite{BTphd}, \cite{teguia2021symbolic}). For any constant $x$ and a non-negative integer $k$, $(z)_0=1$ and $(z)_k=x\cdot(x+1)\cdots(x+k-1)$ denotes the Pochhammer symbol or shifted factorial.
	
	We need a rewrite rule similar to that of holonomic equations to convert differential equations into recurrence equations. It is important to remind what the algorithm looks like in the linear case. Therefore we recall how it works below (see \cite{Koepf1992} and \cite{salvy1994gfun}).
	
	\begin{equation}\label{eq36}
		z^p\cdot f^{(j)}    \longrightarrow  (n+1-p)_j\cdot a_{n+j-p}.
	\end{equation}
	
	In the present case, we also need to convert every differential monomial
	\begin{equation}\label{eq37}
		z^p \cdot f(z)^{(i)}\cdot f(z)^{(j)}, ~\text{for non-negative integers }~ i,j,p.
	\end{equation}
	into a term of the QRE sought.
	
	For all non-negative integers $i$, we have
	\begin{equation}\label{eq38}
		f(z)^{(i)} = \sum_{n=0}^{\infty} (n+1)_i\cdot a_{n+i} \cdot z^n,
	\end{equation}
	therefore
	\begin{eqnarray}
		f(z)^{(i)}\cdot f(z)^{(j)} &=& \left(\sum_{n=0}^{\infty} (n+1)_i\cdot a_{n+i} \cdot z^n \right) \cdot \left(\sum_{n=0}^{\infty} (n+1)_j\cdot a_{n+j} \cdot z^n\right)\nonumber\\
		&=& \sum_{n=0}^{\infty} \left( \sum_{k=0}^{n} (k+1)_i\cdot a_{k+i} \cdot (n-k+1)_j\cdot a_{n-k+j}\right) \cdot z^n. \label{eq39}
	\end{eqnarray}
	by application of the Cauchy product formula which introduces the dummy variable $k$. Finally multiplying $(\ref{eq39})$ by $z^p$ yields
	\begin{equation}\label{eq40}
		z^p \cdot f(z)^{(i)}\cdot f(z)^{(j)} = \sum_{n=0}^{\infty} \left(\sum_{k=0}^{n-p}(k+1)_i\cdot (n-p-k+1)_j\cdot a_{k+i}\cdot a_{n-p-k+j}\right) \cdot z^n,
	\end{equation}
	from which we deduce the rewrite rule
	
	\begin{equation}\label{eq41}
		z^p \cdot f(z)^{(i)}\cdot f(z)^{(j)} \longrightarrow \left(\sum_{k=0}^{n-p}(k+1)_i\cdot (n-p-k+1)_j\cdot a_{k+i}\cdot a_{n-p-k+j}\right).
	\end{equation}
	
	Thus a procedure to convert QDE into QRE follows immediately, i.e. use $(\ref{eq36})$ for linear differential monomials, and $(\ref{eq41})$ for quadratic ones. Our packages contain the procedure \texttt{FindQRE} to compute a QRE satisfied by the power series coefficients of a given $\delta_2$-finite function. We give a few examples computed by our Maxima implementation.
	
	\vspace{-0.1cm}
	
	\begin{example}\label{eg6}\item \vspace{-0.2cm}
		
		\noindent
		\begin{minipage}[t]{8ex}\color{red}\bf
			\begin{verbatim}
				(%i1) 
			\end{verbatim}
		\end{minipage}
		\begin{minipage}[t]{\textwidth}\color{blue}
			\begin{verbatim}
				FindQRE(tan(z),z,a[n]);
			\end{verbatim}
			\vspace{-0.3cm}
		\end{minipage}
		\definecolor{labelcolor}{RGB}{100,0,0}
		\[\displaystyle
		\parbox{10ex}{$\color{labelcolor}\mathrm{\tt (\%o1) }\quad $}
		\left( 1+n\right) \cdot \left( 2+n\right) \cdot {{a}_{n+2}}-2\cdot \sum_{k=0}^{n}\left( k+1\right) \cdot {{a}_{k+1}}\cdot {{a}_{n-k}}=0
		\]
		
		\noindent
		\begin{minipage}[t]{8ex}\color{red}\bf
			\begin{verbatim}
				(%i2) 
			\end{verbatim}
		\end{minipage}
		\begin{minipage}[t]{\textwidth}\color{blue}
			\begin{verbatim}
				FindQRE(z/(exp(z)-1),z,a[n]);
			\end{verbatim}
		\end{minipage}
		\definecolor{labelcolor}{RGB}{100,0,0}
		\[\displaystyle
		\parbox{10ex}{$\color{labelcolor}\mathrm{\tt (\%o2) }\quad $}
		\left( \sum_{k=0}^{n}{{a}_{k}}\cdot {{a}_{n-k}}\right) +\left( n-1\right) \cdot {{a}_{n}}+{{a}_{n-1}}=0\mbox{}
		\]
		
		\noindent
		\begin{minipage}[t]{8ex}\color{red}\bf
			\begin{verbatim}
				(%i3) 
			\end{verbatim}
		\end{minipage}
		\begin{minipage}[t]{\textwidth}\color{blue}
			\begin{verbatim}
				FindQRE(log(1+sin(z)),z,a[n]);
			\end{verbatim}
		\end{minipage}
		\definecolor{labelcolor}{RGB}{100,0,0}
		\vspace{-0.5cm}
		\begin{multline*}
			\displaystyle
			\parbox{10ex}{$\color{labelcolor}\mathrm{\tt (\%o3) }\quad $}\hspace{-0.5cm}
			\left( \sum_{k=0}^{n}\left( k+1\right) \cdot \left( k+2\right) \cdot {{a}_{k+2}}\cdot \left( n-k+1\right) \cdot {{a}_{n-k+1}}\right)+\left( 1+n\right) \cdot \left( 2+n\right) \cdot \left( 3+n\right) \cdot {{a}_{n+3}}=0
		\end{multline*}
	\end{example}
	
	\subsection{Normal forms for $\delta_2$-finite power series}
	
	The last step of our computations consists of using the obtained QRE to write the highest order indexed variable in terms of the others. Evaluating the recurrence equation at some integers allows to reveal the initial values to be computed from evaluation of the input function and its derivatives. Usually, the summations coming from Cauchy products have to be evaluated at their lower and upper bounds in order to extract all occurrences of the highest order indexed variable. Let us consider the QRE obtained for $z/(\exp(z)-1)$.
	\begin{equation}\label{eq42}
		\left( \sum_{k=0}^{n}{{a}_{k}}\cdot {{a}_{n-k}}\right) +\left( n-1\right) \cdot {{a}_{n}}+{{a}_{n-1}}=0\mbox{}.
	\end{equation}
	There is only one summation and we need to extract $a_n$ from it. This corresponds to the indices $k=0$, $k=n$, and the extra second summand. We get
	\begin{equation}\label{eq43}
		\left( \sum_{k=1}^{n-1}{{a}_{k}}\cdot {{a}_{n-k}}\right) +   2\cdot a_0\cdot a_n + \left( n-1\right) \cdot {{a}_{n}}+{{a}_{n-1}}=0\mbox{}.
	\end{equation}
	We then substitute the value of $a_0$ and deduce the recursive formula sought. Together with the necessary initial values, the obtained representation defines a unique sequence (see $(\ref{eq3})$) of coefficients which characterizes the power series expansion of $z/(\exp(z)-1)$. The obtained recursive formula is
	
	\begin{equation}\label{eq44}
		a_{n+3} = - \frac{a_{n+2} + \sum_{k=1}^{n+2}a_k\cdot a_{n+3-k}}{n+4},~ n\geqslant 0,~ a_0=1, a_1=-1/2, a_2=-1/12,
	\end{equation}
	which can be used to recover a well-known Ramanujan identity for the {B}ernoulli numbers (see \cite{chellali1988acceleration}). Indeed, substituting $a_n$ by $B_n/n!$, where $B_n$ denotes the $n^{\text{th}}$ {B}ernoulli number, and multiplying both sides of $(\ref{eq44})$ by $(n+3)!$, gives
	\begin{equation}\label{eq45}
		B_{n+3} = - \frac{(n+3)B_{n+2}+\sum_{k=1}^{n+2}\binom{n+3}{k} B_k\cdot B_{n+3-k}}{n+4}.
	\end{equation}
	Then using the known fact (which can also be deduced) that except $B_1$, all {B}ernoulli numbers of odd subscripts are zero, it follows that $B_{n+3}$ and $B_{n+2}$ cannot be both non-zero at the same time. Finally substituting $n$ by $2n-1$ leads to the identity
	
	\begin{eqnarray}\label{eq46}
		B_{2n+2} &=& -\frac{1}{2n+3}\sum_{k=1}^{n}\binom{2n+2}{2k} B_{2k}\cdot B_{2(n+1-k)}\nonumber\\
		&=& -\frac{1}{2n+3}\left(2\cdot\sum_{k=1}^{\lceil n/2 \rceil}\binom{2n+2}{2k} B_{2k}\cdot B_{2(n+1-k)}-\binom{2n+2}{n+1}B_{n+1}^2\right)~ (n\geqslant 1)
	\end{eqnarray}
	
	This shows how our algorithm can be of good help for manipulating {B}ernoulli numbers and similar sequences. 
	
	It is worth to ask what is the number of initial values required to have a valid formula for the coefficients of a formal power series. The case of the exponential generating function of Bernoulli numbers may seem obvious in this regard. To show that this is always possible, let us consider the representations for $(\cos(z))^2$ and $(\sin(z))^2$ which satisfy the same QDE and therefore the same QRE, but lead to two different recursive formulas for their power series. The differential equation found by Algorithm \ref{Algo1} is given by
	\begin{equation}\label{eq71}
		{{\left( \frac{d}{d z} \operatorname{y}(z)\right) }^{2}}+4 {{\operatorname{y}(z)}^{2}}-4 \operatorname{y}(z)=0,
	\end{equation}
	which lead to the following recurrence equation after application of the rewrite rules $(\ref{eq36})$ and $(\ref{eq41})$.
	\begin{equation}\label{eq72}
		\sum_{k=0}^{n}{\left. \left( k+1\right) \, \left( n-k+1\right) \, {a_{k+1}}\, {a_{n-k+1}}\right.} +4\,  \sum_{k=0}^{n}{\left. {a_k}\, {a_{n-k}}\right.} -4 {a_n}=0.
	\end{equation}
	Note that these two functions satisfy a third-order holonomic differential equation. Algorithm \ref{Algo1} finds $(\ref{eq71})$ because it is of lower order.
	
	Now we extract the highest order indexed variable $a_{n+1}$ and $a_n$. Equation $(\ref{eq72})$ is equivalent to
	\begin{equation}\label{eq73}
		{\left. 2\, {a_1} \, (n+1) \, {a_{n+1}} \right.} + {\left. 4\, {a_2} \, {n} \, {a_n} \right.} + {\left. 4\,\left( 2\, {a_0} - 1\right) \, {a_n}  \right.} + \sum_{k=2}^{n-2}{\left. \left( k+1\right) \, \left( n-k+1\right) \, {a_{k+1}}\, {a_{n-k+1}}\right.} +4\,  \sum_{k=1}^{n-1}{\left. {a_k}\, {a_{n-k}}\right.}=0.
	\end{equation}
	Using the initial coefficients $a_0=(\cos(0))^2=1$, $a_1=\left((\cos(z))^2\right)'(0)=0$ and $a_2=\left((\cos(z))^2\right)''(0)/2!=-1$, one easily deduce a formula for $a_n$ which uniquely identifies the coefficients of the power series of $(\cos(z))^2$. For $(\sin(z))^2$, we have $a_0=a_1=0$ and $a_2=1$, which also yields a formula for $a_n$ from $(\ref{eq73})$ and therefore uniquely defines the power series coefficients of $(\sin(z))^2$.
	
	As one can notice, the identification of a $\delta_2$-finite function to its representation relies on the uniqueness of its power series, which is guaranteed by the analytic property of that function in the neighborhood considered.
	
	Regarding the required initial values, observe that these are always deduced from extractions of highest order indexed variables from the summations appearing in the computed QRE. Therefore we should define a bound for the number of extractions required to make sure that these extractions do not continue indefinitely when several initial values are zero. Such a bound can be easily determined from the coefficients of terms with summations in the QRE. Indeed, the rewrite rule $(\ref{eq41})$ leaves the constant factors out of the summation. Moreover, in this particular case we may write the power series as $f(z)=z^p\sum_{n=0}^{\infty} a_n\cdot z^n$, $p>0$, $a_0\neq0$. By differentiation, we see that $p$ appear as a multiplicative factor starting from the first derivative. Algorithmically, the integer part of the maximum absolute value of constant factors appearing in front of summations in the QRE constitute a bound for the number of required initial values. However, this upper bound may seem crude for some examples; one could also determine a bound for $p$ by using known methods to calculate Laurent series solutions of algebraic differential equations (see \cite{N2017}).
	
	We are now all set to give the following theorem, which summarizes our results.
	
	\begin{theorem}\label{theo2} Given a $\delta_2$-finite function $f(z)$, the following steps
		\begin{enumerate}
			\item Use Algorithm \ref{Algo1} to compute a QDE satisfied by $f(z)$;
			\item Expand the left-hand side of the obtained differential equation and convert it into a QRE using the rewrite rules $(\ref{eq36})$ and $(\ref{eq41})$;
			\item Use the obtained QRE to write its highest-order indexed variable in terms of the preceding ones with the required initial values;
		\end{enumerate}
		define a normal form\footnote{The used algorithm is then called a normal function (see \cite[Chapter 3]{geddes1992algorithms}.)} of the power series representation of $f(z)$.
	\end{theorem}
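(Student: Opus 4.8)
The plan is to read the three constructions as a single map sending a $\delta_2$-finite function $f$ to a pair (recurrence, initial data), and to show that this map meets the requirements of a \emph{normal function} in the sense of \cite[Chapter 3]{geddes1992algorithms}: the produced representation is equivalent to $f$, in that it is satisfied by the coefficients of $f$, and it is faithful on the zero class, in that the zero series is the only series whose representation carries all-zero data. Concretely I would establish two claims, (i) \textbf{existence}: every step terminates and returns an object of the expected shape, and (ii) \textbf{determinacy}: the returned recurrence together with the prescribed initial values pins down the coefficient sequence $(a_n)$ uniquely, hence the formal power series $\sum_n a_n z^n$.

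For existence I would invoke what is already available in the excerpt. The containment of holonomic functions in the $\delta_2$-finite class (Section \ref{sec2}) together with the termination of Algorithm \ref{Algo1} below $N_{max}$ guarantees that Step~1 returns a QDE of some finite order $d$ with polynomial coefficients. Step~2 is purely mechanical: the rewrite rules $(\ref{eq36})$ and $(\ref{eq41})$ are exact identities of formal power series, the latter being nothing but the Cauchy product $(\ref{eq40})$; applying them term by term to the expanded left-hand side of the QDE therefore yields a QRE genuinely satisfied by the coefficients of $f$. So Steps~1--2 always succeed and produce a relation that $f$ obeys.

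The heart of the argument, and the step I expect to be the main obstacle, is Step~3: solving the QRE for its highest-order indexed variable. After expansion, let $a_{n+M}$ be the indexed variable of largest index occurring with a nonzero collected coefficient $L(n)$. Two sources feed $L(n)$: the linear monomials $z^p f^{(j)}$ with $j-p=M$, contributing Pochhammer polynomials $(n+1-p)_j$ in $n$ via $(\ref{eq36})$; and the \emph{boundary terms} of the Cauchy sums in $(\ref{eq41})$, namely the indices $k=0$ and $k=n-p$, at which a factor $a_{n+M}$ is freed and multiplied by a low-index coefficient such as $a_0$. I would show that $L(n)$ is a nonzero polynomial in $n$ --- it cannot vanish identically, since otherwise $a_{n+M}$ would not in fact be the highest-order variable present --- and that its shape is fixed by the already-computed low-index coefficients; this is exactly the mechanism turning the $2a_0 a_n$ boundary term of $(\ref{eq43})$ into the denominator $n+4$ of $(\ref{eq44})$. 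A nonzero polynomial has finitely many nonnegative integer roots, so there is a threshold $n_0$ beyond which $L(n)\neq 0$, and for $n\geq n_0$ the QRE expresses $a_{n+M}$ uniquely in terms of lower-indexed coefficients.

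Finally I would assemble these facts into the normal-form conclusion. The finitely many coefficients not reached by the recurrence --- those of index below $M+n_0$, together with the indices where $L$ vanishes --- are precisely the \emph{required initial values}, each computable from a derivative of $f$ at $0$ through $(\ref{eq38})$. A straightforward strong induction then shows that the recurrence and these initial values determine $(a_n)$ uniquely, and since a formal power series over $\mathbb{K}$ is determined by its coefficient sequence, the representation characterizes $f$. Faithfulness on the zero class is immediate: all-zero initial data forces $a_n=0$ for every $n$ by the same induction, so the zero series is the unique series reduced to the trivial representation. This is exactly the normal-function property of \cite[Chapter 3]{geddes1992algorithms}, and it is what licenses the identity detection illustrated by $(\ref{eq5})$, two functions being recognised as equal when the procedure sends them to representations generating identical coefficient sequences.
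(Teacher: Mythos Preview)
The paper does not supply a proof of Theorem~\ref{theo2}: the statement is followed immediately by the sentence ``The most interesting point of this theorem is that zero-equivalences\ldots'' and then by examples and the identity-proving subsection. So there is no argument in the paper to compare yours against; your plan is in effect the only proof on the table.

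Your outline is broadly correct and, importantly, locates the genuine technical point in Step~3: collecting the coefficient $L(n)$ of the highest-index unknown after peeling off the boundary terms $k=0$ and $k=n-p$ from the Cauchy sums, and arguing that $L(n)$ is a nonzero polynomial so that the recurrence can be solved for $a_{n+M}$ beyond finitely many exceptional indices. That is exactly what the paper does by hand in $(\ref{eq42})$--$(\ref{eq44})$. Two points deserve tightening. First, your appeal to ``termination of Algorithm~\ref{Algo1} below $N_{max}$'' is optimistic: $\delta_2$-finiteness guarantees the existence of \emph{some} QDE but gives no bound on its order, Algorithm~\ref{Algo1} searches only up to a fixed $N_{max}$, and the paper's own Remark concedes that success also depends on CAS simplification heuristics. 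You should make the standing assumption explicit (a QDE within the search bound is actually found) rather than derive it from the hypothesis. Second, the coefficient $L(n)$ is a polynomial in $n$ whose \emph{constant factors} are low-index values such as $a_0$; these must be computed from $f$ (via derivatives at $0$) \emph{before} one can isolate $a_{n+M}$, so there is an implicit bootstrap you should spell out. Your faithfulness-on-zero argument is the right one for a normal form in the sense of \cite[Chapter~3]{geddes1992algorithms} (a canonical form is not claimed), and it correctly absorbs the fact---again noted in the paper's Remark---that different symbolic inputs for the same function may lead to different QDEs and hence different QREs: any such representation with all-zero initial data still generates the zero sequence by your induction.
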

	\begin{proof}
		First, we need to show that these three steps define a normal transformation (or normal function) for the class of $\delta_2$-finite functions. For that purpose we must prove that the output representation and the input function define the same mathematical object, and that $\delta_2$-finite functions equivalent to zero have the same representation (see \cite[Def 3.1 \& Def 3.3]{geddes1992algorithms}). These follow from the arguments developed in the previous paragraphs of this section. By fixing the bound for the number of necessary initial values, the mathematical object is uniquely determined and zero-equivalences are automatically detected.
		
		The second part of the proof consists of showing that the representation given by the normal transformation used is unchanged under the application of that transformation to it. That is, of course, the case since the differential equation and the recurrence equation associated with the recursive formulas obtained from the three steps in Theorem \ref{theo2} characterize it. Thus, one can see the transformation as the one that applies these three steps if the given function is not in the desired output form, and returns the input if it already has the desired form. \QEDA
	\end{proof}
	
	Normal forms are often used in computer algebra to represent mathematical objects. Our method is an extension of common techniques used to prove identities between holonomic functions (see \cite{zeilberger1990holonomic}). The most interesting point of Theorem \ref{theo2} is that zero-equivalences between $\delta_2$-finite functions can be detected with our algorithm. This is an important fact for it is well-noticeable that identities between non-holonomic functions are not easily detected from an algorithmic perspective. Before proving some identities, let us give examples of representations obtained with our Maple implementation.

	\begin{example}The argument \texttt{fpstype=quadratic} is specified to apply the method to non-holonomic functions directly.
		
		\begin{small}
			\begin{lstlisting}
				> FPS(tan(z),z,n,fpstype=quadratic)
			\end{lstlisting}
			
			\vspace{-0.3cm}
			
			\begin{dmath}\label{eq47}
				\hspace{-1.4cm}\mathit{Series} \! \left(\left[\moverset{\infty}{\munderset{n =0}{\textcolor{gray}{\sum}}}\! A \! \left(n \right) z^{n}, A \! \left(n +3\right) \hiderel{=} -\frac{-2 A \! \left(n +1\right)+\moverset{n}{\munderset{\textit{\_k} =1}{\textcolor{gray}{\sum}}}\! \left(-2 \left(\textit{\_k} +1\right) A \! \left(\textit{\_k} +1\right) A \! \left(n -\textit{\_k} +1\right)\right)}{\left(n +2\right) \left(n +3\right)}\right],
				\\[-0.35cm]
				\left\{A \! \left(n \right)\right\},\left\{A \! \left(0\right) \hiderel{=} 0,A \! \left(1\right) \hiderel{=} 1,A \! \left(2\right) \hiderel{=} 0\right\},\mathit{INFO} \right)
			\end{dmath}
			\vspace{-0.2cm}
			
			\begin{lstlisting}
				> FPS(1/(1+sin(z)),z,n,fpstype=quadratic)
			\end{lstlisting}
			
			\vspace{-0.3cm}
			
			\begin{dmath}\label{eq48}
				\hspace{-0.8cm}\mathit{Series} \! \left(\left[\moverset{\infty}{\munderset{n =0}{\textcolor{gray}{\sum}}}\! A \! \left(n \right) z^{n},	A \! \left(n +2\right)\hiderel{=}-\frac{-5 A \! \left(n \right)+\moverset{n -1}{\munderset{\textit{\_k} =1}{\textcolor{gray}{\sum}}}\! \left(-3 A \! \left(\textit{\_k} \right) A \! \left(n -\textit{\_k} \right)\right)}{\left(n +1\right) \left(n +2\right)}\right],\\				
				\left\{A \! \left(n \right)\right\},\left\{A \! \left(0\right) \hiderel{=} 1,A \! \left(1\right) \hiderel{=} -1\right\},\mathit{INFO} \right)
			\end{dmath}
		\end{small}
	\end{example} 
	
	In particular, these obtained representations can be used to deduce truncated series. Our Maxima package contains the procedure \texttt{QTaylor}, implemented for that purpose. Below we compare the results with the built-in Maxima \texttt{taylor} command.
	
	\begin{example}\item
		
		\noindent
		\begin{minipage}[t]{8ex}\color{red}\bf
			\begin{verbatim}
				(%i1) 
			\end{verbatim}
		\end{minipage}
		\begin{minipage}[t]{\textwidth}\color{blue}
			\begin{verbatim}
				taylor(sec(z),z,0,7);
			\end{verbatim}
		\end{minipage}
		\definecolor{labelcolor}{RGB}{100,0,0}
		\[\displaystyle
		\parbox{10ex}{$\color{labelcolor}\mathrm{\tt (\%o1) }\quad $}
		1+\frac{{{z}^{2}}}{2}+\frac{5 {{z}^{4}}}{24}+\frac{61 {{z}^{6}}}{720}+\operatorname{...}
		\]
		
		\noindent
		\begin{minipage}[t]{8ex}\color{red}\bf
			\begin{verbatim}
				(%i2) 
			\end{verbatim}
		\end{minipage}
		\begin{minipage}[t]{\textwidth}\color{blue}
			\begin{verbatim}
				QTaylor(sec(z),z,0,7);
			\end{verbatim}
		\end{minipage}
		\definecolor{labelcolor}{RGB}{100,0,0}
		\[\displaystyle
		\parbox{10ex}{$\color{labelcolor}\mathrm{\tt (\%o2) }\quad $}
		\frac{61 {{z}^{6}}}{720}+\frac{5 {{z}^{4}}}{24}+\frac{{{z}^{2}}}{2}+1
		\]
		
		\noindent
		\begin{minipage}[t]{8ex}\color{red}\bf
			\begin{verbatim}
				(%i3) 
			\end{verbatim}
		\end{minipage}
		\begin{minipage}[t]{\textwidth}\color{blue}
			\begin{verbatim}
				taylor(tan(z),z,0,7);
			\end{verbatim}
		\end{minipage}
		\definecolor{labelcolor}{RGB}{100,0,0}
		\[\displaystyle
		\parbox{10ex}{$\color{labelcolor}\mathrm{\tt (\%o3) }\quad $}
		z+\frac{{{z}^{3}}}{3}+\frac{2 {{z}^{5}}}{15}+\frac{17 {{z}^{7}}}{315}+\operatorname{...}
		\]
		
		\noindent
		\begin{minipage}[t]{8ex}\color{red}\bf
			\begin{verbatim}
				(%i4) 
			\end{verbatim}
		\end{minipage}
		\begin{minipage}[t]{\textwidth}\color{blue}
			\begin{verbatim}
				QTaylor(tan(z),z,0,7);
			\end{verbatim}
		\end{minipage}
		\definecolor{labelcolor}{RGB}{100,0,0}
		\[\displaystyle
		\parbox{10ex}{$\color{labelcolor}\mathrm{\tt (\%o4) }\quad $}
		\frac{17 {{z}^{7}}}{315}+\frac{2 {{z}^{5}}}{15}+\frac{{{z}^{3}}}{3}+z
		\]
	\end{example}
	The ability to do such calculations further sustains our algorithm.
	
	\subsection{Proving identities}\label{subsec3}
	
	As a consequence of Theorem \ref{theo2}, we present automatic proofs of two non-holonomic identities. Of course, we do not ignore Richardson's theorem (see \cite{richardson1969some}). However, it is clear that bringing the zero-equivalence problem to the class of power series solves this issue for the class under consideration. Therefore for $\delta_2$-finite functions our approach is a decision procedure for zero-equivalence. Two expressions $A$ and $B$ define the same $\delta_2$-finite function (at least in the neighborhood of the point of expansion), if our algorithm finds the same power series representation for both of them, or if the one of $A-B$ is zero. The latter comes as a conclusion when all the necessary initial values of the representation sought are zero. It should be noted that our algorithm does not check in which disk the identity is valid. However for analytic functions, if they have the same power series, then they are identical in the largest possible disk. Regarding power series representations, we focus on the neighborhood of the origin.
	
	As first identity, consider
	\begin{equation}
		\log\left(\tan\left(\dfrac{z}{2}\right) + \sec\left(\dfrac{z}{2}\right)\right) = \arcsinh\left(\dfrac{\sin(z)}{1+\cos(z)}\right),~~ -\pi < z < \pi \label{eq49} 
	\end{equation}
	from \cite[Section 3.3]{geddes1992algorithms} (see also \cite[Exercise 9.8]{koepf2021computer}). Let $f$ be the left-hand side of $(\ref{eq49})$ and $g$ its right-hand side. Algorithm \ref{Algo1} finds the same differential equation for $f$, $g$, and $f-g$, which is
	
	\begin{lstlisting}
		> f:=log(tan(z/2)+sec(z/2)):
		> g:=arcsinh(sin(z)/(cos(z)+1)): 
		> FPS:-QDE(f,y(z))
	\end{lstlisting}
	
	\vspace{-0.3cm}
	
	\begin{dmath}\label{eq50}
		-\left(\frac{d}{d z}y \! \left(z \right)\right)^{2}-8 \left(\frac{d^{2}}{d z^{2}}y \! \left(z \right)\right)^{2}+4 \left(\frac{d^{3}}{d z^{3}}y \! \left(z \right)\right) \left(\frac{d}{d z}y \! \left(z \right)\right)=0.
	\end{dmath}
	
	We get the following power series representation for $f$
	\begin{small}
		\begin{lstlisting}
			> FPS(f,z,n,fpstype=quadratic)
		\end{lstlisting}
		
		\vspace{-0.2cm}
		
		\begin{dmath}\label{eq51}
			\mathit{Series} \! \left(\left[\moverset{\infty}{\munderset{n =0}{\textcolor{gray}{\sum}}}\! A \! \left(n \right) z^{n} , A \! \left(n +4\right) \hiderel{=} -\frac{1}{2 \left(n +2\right) \left(n +3\right) \left(n +4\right)}\! \left(-\frac{\left(n +2\right) A \left(n +2\right)}{2}+\\
			\left(\moverset{n}{\munderset{\textit{\_k} =1}{\textcolor{gray}{\sum}}}\! 4 \left(\textit{\_k} +1\right) \left(\textit{\_k} +2\right) \left(\textit{\_k} +3\right) A \! \left(\textit{\_k} +3\right) \left(n -\textit{\_k} +2\right) A \! \left(n -\textit{\_k} +2\right)\right)\\
			+\moverset{n}{\munderset{\textit{\_k} =1}{\textcolor{gray}{\sum}}}\! \left(-\left(\textit{\_k} +1\right) A \! \left(\textit{\_k} +1\right) \left(n -\textit{\_k} +2\right) A \! \left(n -\textit{\_k} +2\right)\right)+\\
			\moverset{n}{\munderset{\textit{\_k} =1}{\textcolor{gray}{\sum}}}\! \left(-8 \left(\textit{\_k} +1\right) \left(\textit{\_k} +2\right) A \! \left(\textit{\_k} +2\right) \left(n -\textit{\_k} +2\right) \left(n +3-\textit{\_k} \right) A \! \left(n +3-\textit{\_k} \right)\right)\right)\right],\\
			\left\{A \! \left(n \right)\right\},\left\{A \! \left(0\right) \hiderel{=} 0,A \! \left(1\right) \hiderel{=} \frac{1}{2},A \! \left(2\right) \hiderel{=} 0,A \! \left(3\right) \hiderel{=} \frac{1}{48}\right\},\mathit{INFO} \right).
		\end{dmath}
	\end{small}
	Thus the series representation of $f$ should differ from that of $g$ only by the initial values (like for $\sin(z)$ and $\cos(z)$). However, since the four necessary initial values are identical as shown by the following truncated series expansion, we deduce that $f=g$. The series representation of $f$ is obtained by substituting $f$ by $g$ in $(\ref{eq51})$. 
	
	\begin{lstlisting}
		> series(f-g,z,3)
	\end{lstlisting}
	
	\vspace{-0.3cm}
	
	\begin{dmath}\label{eq52}
		\mathrm{O}\! \left(z^{4}\right)
	\end{dmath}
	Furthermore, our algorithm detects this identity directly by finding zero as the power series representation of $f-g$.
	\begin{lstlisting}
		> FPS(f-g,z,n,fpstype=quadratic)
	\end{lstlisting}
	
	\vspace{-0.3cm}
	
	\begin{dmath}\label{eq53}
		0
	\end{dmath}
	
	One should note that the current Maple \texttt{simplify} command seems to be unable to recognize this zero-equivalence. 
	
	Our second identity is a similar one given by (see \cite[Section 9.1]{koepf2021computer})
	
	\begin{equation}
		\log\left(\dfrac{1+\tan(z)}{1-\tan(z)}\right) = 2 \arctanh\left(\dfrac{\sin(2z)}{1+\cos(2z)}\right),~~-\frac{\pi}{4}< z <\frac{\pi}{4}. \label{eq54}
	\end{equation}
	
	As previously we denote by $f$ and $g$ the left-hand side and the right-hand side of $(\ref{eq54})$, respectively. This identity can be recognized in Maple as follows.
	
	\begin{lstlisting}
		> f:=log((1+tan(z))/(1-tan(z))): 
		> g:=2*arctanh(sin(2*z)/(1+cos(2*z))):
		> simplify(exp(f)-exp(g))
	\end{lstlisting}
	
	\vspace{-0.3cm}
	
	\begin{dmath}\label{eq55}
		0
	\end{dmath}
	
	Indeed, the composition between $\exp$ and $\arctanh$ applies ``non-trivial'' simplifications that ease the work for the \texttt{simplify} command. 
	
	Let us nevertheless prove $(\ref{eq54})$ using our algorithm. Both sides satisfy the same differential equation, and have the same initial values for their power series representations.
	
	\begin{lstlisting}
		> series(f-g,z,3)
	\end{lstlisting}
	
	\vspace{-0.3cm}
	
	\begin{dmath}\label{eq56}
		\mathrm{O}\! \left(z^{4}\right)
	\end{dmath}
	
	\vspace{-0.4cm}
	
	\begin{lstlisting}
		> FPS:-QDE(g,y(z))
	\end{lstlisting}
	
	\vspace{-0.3cm}
	
	\begin{dmath}\label{eq57}
		-4 \left(\frac{d}{d z}y \! \left(z \right)\right)^{2}-2 \left(\frac{d^{2}}{d z^{2}}y \! \left(z \right)\right)^{2}+\left(\frac{d^{3}}{d z^{3}}y \! \left(z \right)\right) \left(\frac{d}{d z}y \! \left(z \right)\right)=0
	\end{dmath}
	
	This is enough to deduce that $f$ and $g$ coincide. We compute the power series representation from $g$ below.
	
	\begin{small}
		\begin{lstlisting}
			> FPS(g,z,n,fpstype=quadratic)
		\end{lstlisting}
		
		\vspace{-0.4cm}
		
		\begin{dmath}\label{eq58}
			\mathit{Series} \! \left(\left[\moverset{\infty}{\munderset{n =0}{\textcolor{gray}{\sum}}}\! A \! \left(n \right) z^{n}, A \! \left(n +4\right)\hiderel{=}-\frac{1}{2 \left(n +2\right) \left(n +3\right) \left(n +4\right)}\! \left(-8 \left(n +2\right) A \! \left(n +2\right)+\\
			\left(\moverset{n}{\munderset{\textit{\_k} =1}{\textcolor{gray}{\sum}}}\! \left(\textit{\_k} +1\right) \left(\textit{\_k} +2\right) \left(\textit{\_k} +3\right) A \! \left(\textit{\_k} +3\right) \left(n -\textit{\_k} +2\right) A \! \left(n -\textit{\_k} +2\right)\right)+\\
			\moverset{n}{\munderset{\textit{\_k} =1}{\textcolor{gray}{\sum}}}\! \left(-4 \left(\textit{\_k} +1\right) A \! \left(\textit{\_k} +1\right) \left(n -\textit{\_k} +2\right) A \! \left(n -\textit{\_k} +2\right)\right)\\
			+\moverset{n}{\munderset{\textit{\_k} =1}{\textcolor{gray}{\sum}}}\! \left(-2 \left(\textit{\_k} +1\right) \left(\textit{\_k} +2\right) A \! \left(\textit{\_k} +2\right) \left(n -\textit{\_k} +2\right) \left(n +3-\textit{\_k} \right) A \! \left(n +3-\textit{\_k} \right)\right)\right)\right],\\
			\left\{A \! \left(n \right)\right\},\left\{A \! \left(0\right) \hiderel{=} 0,A \! \left(1\right) \hiderel{=} 2,A \! \left(2\right) \hiderel{=} 0,A \! \left(3\right) \hiderel{=} \frac{4}{3}\right\},\mathit{INFO} \right).
		\end{dmath}
	\end{small}
	
	For this zero-equivalence, our Maple implementation detects that $f-g$ is a constant earlier in the computations.
	\begin{lstlisting}
		> FPS:-QDE(f-g,y(z))
	\end{lstlisting}
	
	\vspace{-0.3cm}
	
	\begin{dmath}\label{eq59}
		\frac{d}{d z}y \! \left(z \right)=0
	\end{dmath}
	Thus, unlike the previous identity which takes about a minute to be proven from $f-g$, here the proof is almost instantaneous.
	\begin{lstlisting}
		> FPS(f-g,z,n,fpstype=quadratic)
	\end{lstlisting}
	
	\vspace{-0.2cm}
	
	\begin{dmath}\label{eq60}
		0
	\end{dmath}
	
	\section{Conclusion}
	
	 We have proposed a general-purpose method to compute normal forms for representing the power series of a large class of non-holonomic functions, including that of holonomic functions. The computations were mainly presented for Laurent series and we believe that the method easily adapts to Puiseux series. One could incorporate ideas from \cite{Frankenphd} in this regard. Often the representations found may look ``big''. However, these are likely what one will find when calculating the series directly by hand if one wants to get complete formulas. Moreover, an important advantage of the proposed method is that it can simplify non-trivial identities. Previous Maple releases did not include formal power series computation for non-holonomic functions. We are delighted and grateful that our implementation available at
	
	\begin{center}
		\url{http://www.mathematik.uni-kassel.de/~bteguia/FPS_webpage/FPS.htm}
	\end{center}	
	
	is now incorporated into Maple 2022 in the \texttt{FormalPowerSeries} (renewed by the FPS package) and \texttt{DEtools} packages.

\end{document}